\documentclass[a4paper, 12pt]{amsart}

\usepackage{amssymb, amsthm}
\usepackage{amsmath,amsfonts,amssymb,amscd,amsthm,amsbsy,upref}
\usepackage{makecell}
\usepackage{yhmath}
\usepackage{adjustbox}
\usepackage[backref]{hyperref}
\usepackage[alphabetic,backrefs,lite]{amsrefs}
\usepackage{amscd}   % for commutative diagrams
\usepackage[all]{xy} % for complicated commutative diagrams
\usepackage{graphicx}
\usepackage{mathrsfs}
\usepackage{enumerate}

\usepackage{tikz}
\usepackage{verbatim}

  \tikzstyle{block} = [rectangle, draw,
      text width=10em, text centered, minimum height=4em]
  \tikzstyle{line} = [draw, -latex']

\usepackage{amssymb, amsmath, amsthm}
\usepackage[backref]{hyperref}
\usepackage[alphabetic,backrefs,lite]{amsrefs}
\usepackage{amscd}   % for commutative diagrams
\usepackage[all]{xy} % for complicated commutative diagrams
\usepackage{graphicx}
\usepackage{mathrsfs}
\definecolor{labelkey}{rgb}{0,0,1}
\usepackage{}% http://ctan.org/pkg/multirow
\usepackage{tikz}

\newtheorem{lemma}{Lemma}[section]
\newtheorem{theorem}[lemma]{Theorem}
\newtheorem{proposition}[lemma]{Proposition}

\newtheorem*{prop*}{Proposition}

\newtheorem{claim*}{Claim}
\newtheorem{thm}[lemma]{Theorem}
\newtheorem{defn}[lemma]{Definition}

\theoremstyle{definition}

\newtheorem{corollary}[lemma]{Corollary}
\newtheorem{example}[lemma]{Example}

\makeatletter
\def\namedlabel#1#2{\begingroup
    #2%
    \def\@currentlabel{#2}%
    \phantomsection\label{#1}\endgroup
}
\makeatother

\newcommand{\F}{{\mathbb F}}

\newcommand{\Z}{{\mathbb Z}}

\newcommand{\id}[1]{\langle #1 \rangle}

\newcommand{\lm}{\lambda}

\DeclareMathOperator{\Char}{char}

\usepackage{color}

\numberwithin{equation}{section}
\numberwithin{table}{section}

\title{Cyclic and Constacyclic codes over $\Z_4+i\Z_4$}

\author{Miguel Martín}
\author{Ekin \"Ozman}

\date{}

\keywords{constacyclic codes; cyclic codes; quaternary codes; codes over finite chain rings.}

\begin{document}

\maketitle

\begin{abstract}
In this paper, we study cyclic and constacyclic codes over the finite chain ring $R=\Z_4 + i \Z_4$, where $i^2 = -1$. We prove that all constacyclic codes over $R$ are equivalent to cyclic codes. An algorithm to obtain generators for all simple root constacyclic codes over $R$ is presented. Using a Gray map we then obtain linear $\Z_4$ codes from constacyclic codes over $R$. We present new best linear $\Z_4$ codes found using this method.
\end{abstract}

\subsection*{Keywords:} constacyclic, cyclic and quaternary codes; codes over finite chain rings.

\section{Introduction}
Coding theory is an important area of study in modern mathematics, with a great range of applications from engineering to information theory.
Although most of the research was initially on codes over finite fields, the focus has evolved since to other kinds of alphabets, codes over different types of finite rings or over algebraic curves are of great relevance in modern mathematics \cite{book}. In this paper, we will be concerned with codes over a finite chain ring. In particular, an extension of $\Z_4$.
Codes over the ring $\Z_4$ have been particularly relevant in the study of coding theory. This stems from a series of discoveries around the 1990's, starting with a very relevant paper by Hammons et al. \cite{Hammons1994}, which related linear $\Z_4$ codes to some good non-linear binary codes. Moreover, codes over $\Z_4$ have numerous applications; in bio-informatics, for DNA sequencing, these codes have been studied \cite{Milenkovic2006}, \cite{Dimopoulou2020}. They are also relevant in cryptography and security as they give sequences of high linear complexity \cite{Helleseth1999}. For thorough literature on the basics of $\Z_4$ codes and their properties, the reader is invited to see the following book by Wan \cite{wan1997quaternary}.

This also motivated the research of codes over extensions of $\Z_4$, one of the most studied areas of codes over finite chain rings. This is not only for their properties as codes themselves, but because of the possibility to map them to $\Z_4$ codes. For instance, linear and cyclic codes over $\Z_4[u]/\langle u^2 \rangle$ have been studied by Yildiz, Karadeniz and Aydin in \cite{Yildiz2010} and \cite{Yildiz2014}. Other such extensions studied are $\Z_4[u]/\langle u^2-u \rangle$ by Bandi and Bhaintwal in \cite{Bandi2014}, and constacyclic codes over $\Z_4[u]/\langle u^2-2 \rangle$ by Aydin, Cengellenmis and Dertli in \cite{Aydin2024}.

In this paper, we consider simple root constacyclic codes over the ring $R=\Z_4[u]/\langle u^2+1 \rangle$ and study their properties. In Section \ref{prelims}, we introduce the relevant theory of finite chain rings and constacyclic codes, as well as studying and characterizing the ring $R$ in Section \ref{sec: ring R}. We prove that every constacyclic code over $R$ is equivalent to a cyclic code in Section \ref{equivcyclic}. An iterative algorithm to factor $x^n-\lm$ in $R[x]$ is presented in \ref{lifting} and used in Section \ref{sec: findings} to obtain new best $\Z_4$ codes. All the computations were done using Magma \cite{magma} and the relevant code can be found at 

\url{https://github.com/miguelmartin005/Cyclic-and-Constacyclic-Codes-Over-Z4-iZ4}{}.

\section{Preliminaries}\label{prelims}
A \emph{chain ring} is a ring that has its ideals totally ordered by inclusion. In the simple case where we consider a finite and commutative ring with unity, there are various equivalent characterizations.
\begin{proposition}[cf. \cite{Dinh}, Proposition 2.1]\label{charact}
    For a finite commutative ring $R$ with unity the following are equivalent:
    \begin{enumerate}
        \item $R$ is a chain ring.
        \item $R$ is a local principal ideal ring.
        \item $R$ is a local ring with principal maximal ideal $\langle \gamma \rangle$. In this case, $R$ has a total of $t+1$ ideals, all of the form $\langle \gamma ^i \rangle$ for $0 \leq i \leq t$, where $t$ is the nilpotency index of $\gamma $.
        \[\langle0 \rangle \subsetneq \langle \gamma ^{t-1} \rangle \subsetneq ... \subsetneq \langle \gamma ^2 \rangle \subsetneq \langle \gamma  \rangle \subsetneq R\]
    \end{enumerate}
\end{proposition}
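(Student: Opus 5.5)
I will prove the cycle of implications (1) $\Rightarrow$ (2) $\Rightarrow$ (3) $\Rightarrow$ (1), and obtain the description of the ideals as part of the last step.

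For (1) $\Rightarrow$ (2): if the ideals of $R$ are totally ordered, then among the maximal ideals any two are comparable, hence equal, so $R$ is local. To see that every ideal $I$ is principal, note that $I$ is finitely generated because $R$ is finite. Write $I=\langle a_1,\dots,a_k\rangle$. The principal ideals $\langle a_j\rangle$ are totally ordered, so one of them, say $\langle a_{j_0}\rangle$, contains all the others. Then $I=\langle a_{j_0}\rangle$. The implication (2) $\Rightarrow$ (3) is immediate, since the maximal ideal is in particular an ideal and is therefore principal; write it as $\mathfrak m=\langle\gamma\rangle$.

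For (3) $\Rightarrow$ (1), let $\mathfrak m=\langle\gamma\rangle$. In a finite local ring the maximal ideal consists of the non-units and is nilpotent, so $\gamma$ is nilpotent; let $t$ be its nilpotency index. Take a nonzero proper ideal $I$ and a nonzero $a\in I$. Choose the largest $s$ with $a\in\langle\gamma^s\rangle$. Such an $s$ exists with $0\le s<t$, because $a\neq 0=\gamma^t$. Then $a=u\gamma^s$ for some $u\in R$, and $u$ must be a unit: otherwise $u\in\mathfrak m$, so $u=v\gamma$ and $a\in\langle\gamma^{s+1}\rangle$, contradicting maximality of $s$. Hence $\langle a\rangle=\langle\gamma^s\rangle$. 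Now let $s_0$ be the minimum of these exponents over all nonzero $a\in I$. Every element of $I$ lies in $\langle\gamma^{s_0}\rangle$, and $\gamma^{s_0}\in I$, so $I=\langle\gamma^{s_0}\rangle$. Thus every ideal is one of $\langle\gamma^i\rangle$ for $0\le i\le t$. These ideals form a chain, which gives (1).

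It remains to show that the chain is strict, so that there are exactly $t+1$ ideals. Suppose $\langle\gamma^{i}\rangle=\langle\gamma^{i+1}\rangle$ for some $i<t$. Then $\gamma^i=r\gamma^{i+1}$, so $\gamma^i(1-r\gamma)=0$. Since $1-r\gamma$ is a unit in the local ring, $\gamma^i=0$, contradicting $i<t$. This strictness step, together with the unit argument $a=u\gamma^s$ above, is the only place that needs care; in both, the key point is that $1+\mathfrak m$ consists of units. The rest is bookkeeping.
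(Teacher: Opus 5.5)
Your argument is correct and complete. The paper gives no proof of this proposition. It is imported from \cite{Dinh} (Proposition 2.1) and used only to conclude that $\Z_4+i\Z_4$ is a chain ring whose ideals are generated by powers of $1+i$. So there is no route in the paper to compare yours against: what you have written is a self-contained, elementary replacement for that citation.

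Your cycle (1)$\Rightarrow$(2)$\Rightarrow$(3)$\Rightarrow$(1) also delivers the ``in this case'' clause exactly as stated. The last step shows every ideal is some $\langle\gamma^i\rangle$, and the strictness step shows that $i\mapsto\langle\gamma^i\rangle$, $0\le i\le t$, is injective, giving precisely $t+1$ ideals.

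The only ingredient you quote rather than prove is that $\gamma$ is nilpotent. It has a one-line proof in the same spirit as your other two unit arguments. By finiteness, $\gamma^m=\gamma^{m+k}$ for some $m,k\ge 1$, so $\gamma^m(1-\gamma^k)=0$. Since $1-\gamma^k\in 1+\mathfrak m$ is a unit, $\gamma^m=0$.

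A cosmetic remark: your argument for $I\neq 0$ never uses that $I$ is proper. It therefore also covers $I=R$ with $s_0=0$, while $\langle 0\rangle=\langle\gamma^t\rangle$.
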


\subsection{The chain ring $R =\Z_4+i\Z_4$}\label{sec: ring R}

Let $\mathbb Z_4$ denote the integers modulo $4$. The chain ring we will focus in this paper is $R =\Z_4+i\Z_4$ which is a finite commutative ring of characteristic four. It contains 16 elements.
The units of $R$ are precisely $r=a+bi \in R$ where $a+b \equiv 1 \mod 2$.
\[R^\times = \{1,i,3,3i,1+2i,2+i,3+2i,2+3i\}.\]
Any element which is not a unit will be inside the unique maximal ideal $\id{1+i}$. Then by Proposition \ref{charact}, we have that $R$ is a chain ring and all ideals of $R$ are generated by powers of $1+i$. We find the following three non-trivial ideals:
\begin{align*}
    \id{1+i} &= \{0,2,2i,1+i,2+2i,1+3i,3+i,3+3i\},\\
    \id{2} &= \{0,2,2i,2+2i\},\\
    \id{2+2i} &= \{0,2+2i\}.
\end{align*}
 Therefore, the local ring $R$ has a chain of inclusion of ideals as follows
\[ \langle0 \rangle \subsetneq \langle2+2i \rangle \subsetneq \langle 2 \rangle \subsetneq \langle 1+i \rangle \subsetneq R\]
Moreover, the residue field of the ring $R$ is $k =R/\id{1+i} \cong \F_2$.

\subsection{Basics of constacyclic codes}

Throughout this section, we consider $R$ to be a finite commutative chain ring with unity and we denote its residue field by $k$. 

A linear code over $R$ of length $n$ is an $R$-submodule of $R^n$, we say a linear code is \emph{free} when it is free as a submodule.

\begin{defn}
    Let $\lm$ be a unit in $R$, the map $\sigma_\lm : R^n \to R^n$ defined by
    \[\sigma_\lm(x_1,...,x_n) = (\lm x_n,x_1,...,x_{n-1})\]
    is called the \textbf{constacyclic shift by $\lm$}. A linear code $C \subset R^n$ is \textbf{$\lm$-constacyclic} when it is closed under $\sigma_\lm$, i.e. if for every $c \in C$ we have  $\sigma_\lm(c) \in C$. A constacyclic code over a finite chain ring $R$ is \textbf{simple root} if its length $n$ is coprime to the characteristic of $R$, $\Char(R)$ and \textbf{repeated root} otherwise.
\end{defn}

The notion of constacyclic codes is a generalization of the concept of cyclicity; in particular, cyclicity and $1$-constacyclicity are the same. Similarly to the cyclic case, we identify codewords with equivalence classes of polynomials via:
\[R^n \to R[x]/\id{x^n-\lm}\]
\[(c_0,...,c_{n-1}) \mapsto c_0+c_1x+...+c_{n-1}x^{n-1} + \id{x^n-\lm}.\]
When seeing codewords as classes of polynomials in $R[x]/\id{x^n-\lm}$, it follows that multiplication by $x$ corresponds to a $\lm$-constacyclic shift and thus, $\lm$-constacyclic codes are precisely ideals of the ring $R[x]/\id{x^n-\lm}$ (cf. \cite{book}, Proposition 17.3.1).

For some polynomial $f(x) \in R[x]$, we denote by $\overline f \in R[x]/\id{x^n-\lm}$ its class modulo $\id{x^n-\lm}$. Conversely, for any class $\overline f \in R[x]/\id{x^n-\lm}$, we write $f$ to be its unique minimal degree representative.

\begin{thm}\label{freee}
    For every monic polynomial $g(x) \mid x^n-\lm$ in $R[x]$, the code $\langle \overline g \rangle \subset R[x]/\langle x^n-\lm \rangle$ is a free $\lm$-constacyclic code of length $n$ and dimension $n-\deg(g)$ over $R$.
\end{thm}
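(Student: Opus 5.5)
Write $x^n-\lm = g(x)h(x)$ with $g$ monic of degree $r$. Then $h$ is monic of degree $n-r$, since leading coefficients multiply and $g$ is monic. The plan is to show that the $n-r$ elements
\[\overline g,\ x\overline g,\ \dots,\ x^{n-r-1}\overline g\]
form an $R$-basis of $\langle \overline g\rangle$. This gives freeness and the dimension $n-\deg(g)$ at once. Constacyclicity is automatic, because $\langle \overline g\rangle$ is an ideal of $R[x]/\langle x^n-\lm\rangle$, and such ideals are exactly the $\lm$-constacyclic codes, as recalled above.

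\emph{Spanning.} Every element of $\langle \overline g\rangle$ has the form $\overline{a g}$ for some $a\in R[x]$. Since $h$ is monic, division with remainder by $h$ is valid over any commutative ring, so we may write $a = qh + s$ with $\deg s < n-r$. Then
\[ag = q(x^n-\lm) + sg \equiv sg \pmod{x^n-\lm}.\]
Hence every codeword equals $\overline{sg}$ with $\deg s<n-r$, which is an $R$-linear combination of the proposed basis elements.

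\emph{Linear independence.} Suppose $\sum_{j=0}^{n-r-1} s_j x^j \overline g = 0$, and put $s=\sum s_jx^j$. Then $x^n-\lm$ divides $sg$ in $R[x]$. Because $\deg(sg) < n$ and $x^n-\lm$ is monic of degree $n$, division by it is unique, so $sg=0$ as a polynomial. Since $g$ is monic, $sg=0$ forces $s=0$: if $s\neq 0$, the coefficient of $x^{\deg s+r}$ in $sg$ is the leading coefficient of $s$, which is nonzero. So all $s_j=0$, and zero divisors in $R$ cause no trouble. As a check, the generator matrix built from these shifts is upper triangular with $1$'s on the diagonal, so the code is free of rank $n-r$.

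The only delicate step is working over a ring with zero divisors instead of a field. That difficulty is handled entirely by the monicity of $g$ and $h$: it is what makes division with remainder valid and what lets us cancel $g$. No chain-ring structure is needed beyond commutativity, so I expect no real obstacle.
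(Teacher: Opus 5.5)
Your proposal is correct and follows essentially the same route as the paper: you exhibit $\{\overline g, x\overline g,\dots,x^{n-r-1}\overline g\}$ as an $R$-basis, and get linear independence from $\deg(sg)<n$ together with the monicity of $g$ (the paper reaches the same conclusion through a top-coefficient induction). Your spanning step, which reduces the multiplier modulo the monic cofactor $h$, is actually a cleaner justification of a point the paper only asserts, namely that every codeword has a representative $sg$ with $\deg s<n-\deg(g)$.
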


\begin{proof}
    % The ideal $\langle g \rangle \lhd R[x]$ contains $x^n -a$ as $g$ is a divisor. This is thus in bijection with the ideal $\langle \overline{g} \rangle \lhd R[x]/\langle x^n -a \rangle$.

    To show that for any $g(x)$ dividing $x^n-\lm$, the code $\id{\overline g}$ is free of dimension $n-\deg(g)$, We first show the set $\{\overline g,x\overline g,...,x^{n-d-1}\overline g\}$ is a basis, where $d := \deg(g)$. Assume for $a_i \in R$ not all zero, we have
    \[a_0\overline g + a_1x\overline g + ...+a_{n-d-1}x^{n-d-1}\overline g = \overline 0,\] which means
    \[(x^n-\lm)\mid  a_0g +a_1xg +...+a_{n-d-1}x^{n-d-1}g.\]

    However, the highest degree appearing in the sum will be $n-1$, therefore this is only possible if the sum itself is zero. But then, the only term of degree $n-1$ will have coefficient $a_{n-d-1}$, implying $a_{n-d-1}=0$. Which then would mean the only term in the sum with coefficient $n-2$ would be $a_{n-d-2}$, giving this one is also zero. Inductively this would imply every $a_i = 0$, proving linear independence.

   To show that the set $\{\overline g,x\overline g,...,x^{n-d-1}\overline g\}$ spans the code $\langle \overline{g} \rangle$, notice that if $f + \langle x^n-\lm \rangle \in \langle \overline{g} \rangle$ and $\deg(f) <n$, then $f = gh$ for some $h \in R[x]$ of degree smaller than $ n-d$. Thus, to span any possible polynomial in this ideal, we need to multiply the generator $g$ by polynomials of degree up to $n-d-1$. In conclusion, since this set is linearly independent and spans any polynomial in the ideal, it is a basis of size $n-d$ for $\langle \overline g \rangle$.
\end{proof}

In fact, it turns out that any free simple root constacyclic code will be of this form (cf. \cite{Norton2000-it}, Proposition 4.11 and Remark 4.12). Therefore, if we understand the factorization of the polynomial $x^n-\lm \in R[x]$ for $n$ coprime to $\Char(R)$, we can generate all free simple root $\lm$-constacyclic codes over $R$. For this, we make use of Hensel's Lift.

\begin{theorem}[Hensel's Lift (cf. \cite{Dinh}, Lemma 2.4)]
    Let $f \in R[x]$ and $\tilde f\in k[x]$ its reduction to the residue field. Assume $\tilde f = g_1g_2...g_r$ where $g_i$ are pairwise coprime polynomials in $k[x]$. Then there exist pairwise coprime polynomials $f_i \in R[x]$ for $i \in \{1,...,r\}$ such that $f = f_1f_2...f_r$ and $\tilde f_i = g_i$ for all $i \in \{1,...,r\}$.
\end{theorem}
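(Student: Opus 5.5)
The plan is to prove Hensel's Lift by induction on the nilpotency index $t$ of the generator $\gamma$ of the maximal ideal, lifting one step at a time from $R/\langle\gamma^{j}\rangle$ to $R/\langle\gamma^{j+1}\rangle$. By induction on $r$ it suffices to treat $r=2$: if $\tilde f = g_1 g_2$ with $g_1,g_2$ coprime in $k[x]$, we produce $f=f_1f_2$ with $\tilde f_i=g_i$. For general $r$ we then apply the case $r=2$ to the splitting $g_1\cdot(g_2\cdots g_r)$. Here $g_1$ is coprime to the product because the $g_i$ are pairwise coprime and $k[x]$ is a PID. After that we recurse on the factor lifting $g_2\cdots g_r$. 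I will also arrange that $f_1$ is monic whenever $g_1$ is, lifting its leading coefficient exactly.

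For the case $r=2$, I first pick arbitrary lifts $f_1^{(1)},f_2^{(1)}\in R[x]$ of $g_1,g_2$, taking $f_1^{(1)}$ monic if $g_1$ is. Then $f\equiv f_1^{(1)}f_2^{(1)} \pmod{\gamma}$. Coprimality in $k[x]$ gives $a,b\in k[x]$ with $ag_1+bg_2=1$. Lifting these to $A,B\in R[x]$ gives $Af_1^{(1)}+Bf_2^{(1)}=1-\gamma h$ for some $h$. Since $\gamma$ is nilpotent, $1-\gamma h$ is a unit in $R[x]$, so the lifts are coprime (they generate the unit ideal). This shows that coprimality of the final $f_i$ is automatic once they reduce to the $g_i$.

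The inductive step runs as follows. Suppose $f\equiv f_1^{(j)}f_2^{(j)}\pmod{\gamma^j}$, say $f-f_1^{(j)}f_2^{(j)}=\gamma^j e$. I seek corrections $f_i^{(j+1)}=f_i^{(j)}+\gamma^j u_i$. Expanding gives
\[f-f_1^{(j+1)}f_2^{(j+1)}\equiv \gamma^j\bigl(e-u_1f_2^{(j)}-u_2f_1^{(j)}\bigr)\pmod{\gamma^{j+1}},\]
using $2j\ge j+1$. It therefore suffices to solve $u_1g_2+u_2g_1\equiv\tilde e$ in $k[x]$, and this is solvable by the Bézout relation via $u_1=b\tilde e$ and $u_2=a\tilde e$. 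To keep $f_1$ monic, I use the division algorithm by the monic $g_1$ to reduce $u_2$ so that $\deg u_2<\deg g_1$, adjusting $u_1$ accordingly. One subtlety is that $\gamma^j u$ depends only on $u$ modulo $\gamma$; this is fine because $\gamma^j\cdot\gamma=\gamma^{j+1}$. Since $\gamma^t=0$, after $t$ steps we get exact equality $f=f_1f_2$.

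The main obstacle is bookkeeping rather than ideas. The Bézout solve is the key point, but degree and leading-coefficient control matters if one wants monic lifts, which is what the later applications to $x^n-\lambda$ need. The monic case needs care in the degree reduction when $\tilde f$ has smaller degree than $f$. For $x^n-\lambda$ this does not arise, since it is monic, so I would state the lemma for monic $f$ with monic $g_i$ if necessary.
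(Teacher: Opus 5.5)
Your proof is correct and proves the statement in the generality in which it is stated. You use linear Hensel lifting, one power of $\gamma$ at a time, for an arbitrary (not necessarily monic) $f$ over any finite commutative chain ring. You reduce to $r=2$ by splitting off one factor, and coprimality of the lifts follows because $1-\gamma h$ is a unit.

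The paper itself gives no proof of this theorem; it is quoted from Dinh. The closest argument in the paper is the specialised construction of Section~\ref{lifting}, which only treats $R=\Z_4+i\Z_4$ and monic $f$ with coefficients in $\Z_4$, such as $x^n-1$. It differs from your route as follows:
\begin{enumerate}
\item The lift from $k$ to $R/\id{2}=R/\id{(1+i)^2}$ costs nothing. The $\F_2[x]$ factorization and its B\'ezout coefficients are simply read in $(\F_2+i\F_2)[x]$ (Proposition~\ref{skiplift}).
\item A single correction $(g+t)(h+s)$, with $s=\alpha\Delta$ and $t=\beta\Delta$, then removes the entire error because $\id{2}^2=0$. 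This is one quadratic step, instead of the three linear steps your scheme needs for $\gamma=1+i$, whose nilpotency index is $4$.
\item Factors are then peeled off one at a time, as in your induction on $r$. Coprimality comes from the same unit trick, $(1+2V)(1-2V)=1$.
\item The case $x^n-\lm$ is handled by transport through $T_\lm$ (Lemma~\ref{constamaplift}) rather than by lifting it directly.
\end{enumerate}
Your route gives the full generality that the cited lemma asserts. The paper's gives a shorter computation with monic output, tailored to the polynomials it actually needs. However, it would not apply as written to an $f$ with coefficients outside $\Z_4$.

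There is one slip in your optional monic refinement. To keep $f_1^{(j)}+\gamma^j u_1$ monic you must bound $\deg u_1$. So write $u_1=qg_1+w$ with $\deg w<\deg g_1$, and replace $(u_1,u_2)$ by $(w,\,u_2+qg_2)$; this leaves $u_1g_2+u_2g_1$ unchanged. Reducing $u_2$ modulo $g_1$, as you wrote, does not control the correction to $f_1$ at all. Moreover, the resulting change $-qg_1^2$ in $u_2g_1$ cannot be absorbed by modifying $u_1$ unless $g_2\mid q$. The corrected step is exactly what Proposition~\ref{s'prop} does: the correction $s$ to $h$ is reduced modulo $h$, and $t$ is replaced by $t+qg$. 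Since the statement does not ask for monic lifts, this slip does not affect the validity of your proof.
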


Usually we do not have a unique factorization over $R$. However, under certain conditions, we can conclude one, as described in the following lemma. Recall that a non-unit polynomial $f(x) \in R[X]$ is called \emph{basic irreducible} if
both $f$ and $\bar{f}$ are irreducible.

\begin{lemma}[cf. \cite{Norton2000-it}, Theorem 2.7 and \cite{Dinh}, Proposition 2.7]\label{uniquefact}
    If $f\in R[x]$ is monic and $\tilde{f}$ is square-free, then $f$ factors uniquely into monic, pairwise coprime, basic irreducible polynomials over $R[x]$.
\end{lemma}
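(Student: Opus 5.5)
Existence comes from Hensel's Lift and uniqueness from a Nakayama-type argument in $R[x]$. Throughout, $R$ is a finite commutative chain ring with maximal ideal $\langle\gamma\rangle$ of nilpotency index $t$ and residue field $k$.

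\textbf{Existence.} Factor $\tilde f=g_1\cdots g_r$ into monic irreducibles in $k[x]$. They are pairwise coprime because $\tilde f$ is square-free. Hensel's Lift gives pairwise coprime $f_i\in R[x]$ with $f=f_1\cdots f_r$ and $\tilde f_i=g_i$. To make the $f_i$ monic, note that their leading coefficients multiply to $1$. After first normalising each $f_i$ to have degree $\deg g_i$, which the standard Hensel construction already provides, these leading coefficients are units, and we can divide each $f_i$ by its leading coefficient. Each $f_i$ is then basic irreducible: $\tilde f_i=g_i$ is irreducible, and any factorisation $f_i=ab$ into nonunits would reduce to a nontrivial factorisation of $g_i$. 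Such a reduction would be nontrivial because a monic-degree argument shows a polynomial whose reduction is a unit, i.e.\ a nonzero constant, is a unit plus a nilpotent, hence a unit.

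\textbf{Uniqueness.} Suppose $f=h_1\cdots h_s$ is another factorisation into monic, pairwise coprime, basic irreducibles. Reducing gives $\tilde f=\tilde h_1\cdots\tilde h_s$ with each $\tilde h_j$ irreducible. Unique factorisation in $k[x]$ then gives $s=r$ and, after reordering, $\tilde h_i=g_i$. It remains to show $h_i=f_i$. Since $g_i$ is coprime to $\prod_{j\ne i}g_j$, we get $1=a\tilde h_i+b\prod_{j\ne i}\tilde f_j$ in $k[x]$. Lifting, $a h_i+b\prod_{j\neq i}f_j=1+\gamma e$ for some $e\in R[x]$, and $1+\gamma e$ is a unit in $R[x]$ because $\gamma$ is nilpotent. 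So $h_i$ and $\prod_{j\ne i}f_j$ are coprime in $R[x]$, i.e.\ they generate the unit ideal. Now $h_i$ divides $f=f_i\prod_{j\ne i}f_j$. Multiplying the Bezout relation by $f_i$ shows $h_i\mid f_i$. Symmetrically $f_i\mid h_i$. Both are monic of the same degree, so $h_i=f_i$.

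The main obstacle is the monicity and degree control in the existence step, namely ensuring the Hensel lifts can be taken monic of degree $\deg g_i$. I would handle it by running the Hensel step explicitly, inducting on the power of $\gamma$ modulo which the factorisation holds. At each stage, correct $f_i$ by adding $\gamma^m u_i$ with $\deg u_i<\deg g_i$, which is possible by division with remainder in $k[x]$ using the coprimality of the $g_i$. This keeps each factor monic of the right degree throughout, and the process terminates once $\gamma^t=0$.
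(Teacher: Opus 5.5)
Your argument is correct. The paper gives no proof of this lemma: it is quoted from \cite{Norton2000-it} and \cite{Dinh}. So you are supplying the standard argument behind those citations, which has two parts. For existence, you use a degree-controlled Hensel lift. For uniqueness, you use the fact that a B\'ezout identity modulo $\gamma$ lifts to one with right-hand side $1+\gamma e$, which is a unit; hence $h_i\mid f=f_i\prod_{j\ne i}f_j$ forces $h_i\mid f_i$.

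What your self-contained route buys is the monicity that the paper's citation-based treatment glosses over. The version of Hensel's Lift quoted in the paper does not assert that the $f_i$ are monic of degree $\deg g_i$. Moreover, $\tilde f_i=g_i$ alone does not force this: $3g_i$ or $g_i+2x^N$ with $N>\deg g_i$ also reduce to $g_i$ in $R[x]$. So the paper's remark that the Hensel factors ``are indeed monic'' needs exactly your inductive correction $f_i\mapsto f_i+\gamma^m u_i$ with $\deg u_i<\deg g_i$. That correction is the general-chain-ring version of what the paper does by hand for $\Z_4+i\Z_4$ in Proposition \ref{s'prop}, where $s$ is reduced modulo $h$ so that $h+s'$ stays monic. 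Your uniqueness step uses the same unit trick as the coprimality argument in Lemma \ref{liftingthmyes}.

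Two presentational remarks:
\begin{enumerate}
\item The sentence ``their leading coefficients multiply to $1$'' is only true after the degree normalisation. It is cleaner to lead with the explicit lift. There, note that $\deg(f-f_1\cdots f_r)<\deg f$, since this is what makes the partial-fraction decomposition with $\deg\bar u_i<\deg g_i$ available. Then obtain pairwise coprimality of the lifts from your own ``coprime modulo $\gamma$ implies coprime'' observation.
\item The reverse divisibility $f_i\mid h_i$ is superfluous. Since $h_i\mid f_i$ and both are monic of the same degree, the quotient is already forced to be $1$.
\end{enumerate}
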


Note that when these conditions are met, the factorization given by the Hensel's Lift will be precisely this unique factorization, as every $\tilde f_i = g_i$ so they are indeed monic, pairwise coprime, basic irreducible when $g_i$ are monic irreducible. This will always be the case when they are the factors of a monic $\tilde f$.

\begin{proposition}\label{uniquefact2}
    The polynomial $x^n-\lm$ factors uniquely into monic, pairwise coprime, basic irreducible polynomials in $R[x]$ when $\gcd(n,\Char(R))=1$.
\end{proposition}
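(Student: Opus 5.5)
The plan is to apply Lemma \ref{uniquefact} to $f = x^n - \lm$. The polynomial $f$ is monic, so all that has to be checked is that its reduction $\tilde f = x^n - \tilde\lm \in k[x]$ is square-free. Here $\tilde\lm$ is the image of $\lm$ in the residue field $k$. Once this is established, Lemma \ref{uniquefact} gives the unique factorization into monic, pairwise coprime, basic irreducible polynomials directly.

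To show $\tilde f$ is square-free I will use the derivative criterion over the field $k$: a polynomial is square-free when it is coprime to its formal derivative. We have $\tilde f' = n x^{n-1}$. The key point is that $n$ is nonzero in $k$. Indeed, $\Char(k)$ is a prime $p$, and since $R$ is a finite local ring, $\Char(R)$ is a power of $p$. (Concretely, $k = R/\id{\gamma}$ has characteristic $p$, so $p \in \id{\gamma}$, which is nilpotent, hence $\Char(R) = p^m$.) Therefore $\gcd(n,\Char(R)) = 1$ forces $p \nmid n$, and $n \neq 0$ in $k$.

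Next, $\lm$ is a unit, so $\tilde\lm \neq 0$ in $k$. This gives $\tilde f(0) = -\tilde\lm \neq 0$, so $x \nmid \tilde f$. Any common factor of $\tilde f$ and $\tilde f' = n x^{n-1}$ must divide $x^{n-1}$ (because $n$ is invertible in $k$), and hence would be a power of $x$. Since $x \nmid \tilde f$, the only such common factor is $1$, so $\gcd(\tilde f,\tilde f') = 1$ and $\tilde f$ is square-free. Equivalently, $\tilde f$ has distinct roots in an algebraic closure of $k$.

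The only step that needs any care is relating $\Char(R)$ to $\Char(k)$, so that coprimality with $\Char(R)$ transfers to nonvanishing of $n$ in $k$. In our case this is immediate: $\Char(R) = 4$ and $k \cong \F_2$. I would still state the general argument given above for completeness. Everything else is the standard separability check followed by one application of Lemma \ref{uniquefact}.
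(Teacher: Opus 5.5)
Your proposal is correct and follows the paper's proof. Both apply Lemma~\ref{uniquefact} and verify that $\widetilde{x^n-\lm}$ is square-free via $\gcd(h,h')=1$, using $h'=nx^{n-1}$ and $\tilde\lm\neq 0$. Your extra step, showing that $\Char(R)$ is a power of $\Char(k)$ so that $n$ is nonzero in $k$, makes explicit a point the paper passes over.
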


\begin{proof}
    We need to check the assumptions in Lemma \ref{uniquefact}. Namely that $h = \widetilde{x^n - \lm}$ is square-free. Note $\lm$ is a unit so $\tilde\lm \neq 0$. Moreover, $h$ is separable as $h' = nx^{n-1}$ which is nonzero as $n$ is coprime to the characteristic. This means that $\gcd(h,h') =1$ as $h'$ is nonzero and only has $0$ as a root, which is not a root of $h$. Separability implies square-free, as being divisible by a polynomial twice will inevitably result in repeated roots.
\end{proof}

Proposition \ref{uniquefact2} is precisely the reason we study simple root constacyclic codes, the restraint on the length $n$ is trivial to ensure this unique factorization accessible from the Hensel's Lift, that proves very useful when studying the ideals of $R[x]/\langle x^n-\lm \rangle$. Also note that,  this motivates the choice of $\lm$ to be a unit, as it is necessary for the above proof.

\section{Equivalence of Constacyclic Codes Over $R$}\label{equivcyclic}

With the theory of free constacyclic codes over finite chain rings laid out, we turn our attention to the object of study in this paper. From now on we let $R$ be the ring defined as
\[ R:= \mathbb{Z}_4+ i\Z_4 \cong \mathbb{Z}_4[u] /\langle u^2+1 \rangle.\]

    \begin{defn}A monomial transformation between two $R$-linear codes of length $n$, $C_1$ and $C_2$ is a surjection of the form: 
    \[ T: C_1 \to C_2\]
    \[ (x_1,x_2,...,x_n) \mapsto (u_1x_{\sigma(1)}, u_2x_{\sigma(2)},..., u_nx_{\sigma(n)})\]
    Where $u_i$ are units and $\sigma \in S_n$ is a permutation \cite{eqcodesoverfiniterings}. 
    \end{defn}
    
    We say two codes $C_1,C_2 \subset R^n$ are \emph{equivalent} if there exists a monomial transformation $T: C_1 \to C_2$. It follows easily that when two codes $C_1$ and $C_2$ are equivalent, they have the same Hamming weight. Moreover, if either is free, then both are, and their dimension is equal.

In this section we show that every simple root $\lambda$-constacyclic code over $R$ is equivalent to a cyclic code over $R$. We will use the nice properties of the units of $R$ to do so. To make the work easier, denote

\[ U_1 = \{a+bi \in R^\times: a \equiv 1 \mod 2\} = \{1,3,1+2i,3+2i\},\]
\[U_2 = \{a+bi \in R^\times: a \equiv0\mod2\} = \{i,3i, 2+i, 2+3i\}.\]
Notice that $R^\times$ is the disjoint union of $U_1$ and $U_2.$

\begin{lemma}\label{squareunitsR}
    Let $\lm\in R^\times$. Then $\lm^2 = 1$ if $\lambda \in U_1$ and $\lm^2 = 3$ if $\lambda \in U_2$.
\end{lemma}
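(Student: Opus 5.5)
Since $\lm = a+bi$ with $a,b\in\Z_4$, we can compute $\lm^2$ symbolically using $i^2=-1$ and reduce modulo $4$, using the parity conditions that define $U_1$ and $U_2$. Expanding gives
\[\lm^2 = a^2 - b^2 + 2abi.\]
The plan is to determine $a^2$, $b^2$ and $2ab$ modulo $4$ from parity information alone. The key observation is that for any $x\in\Z_4$, $x^2 \equiv 1 \pmod 4$ if $x$ is odd and $x^2\equiv 0 \pmod 4$ if $x$ is even. Likewise, $2ab \equiv 0 \pmod 4$ as soon as one of $a,b$ is even.

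Next I split into the two cases given by the disjoint union $R^\times = U_1 \sqcup U_2$, recalling that units are exactly those $a+bi$ with $a+b$ odd.

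\emph{Case $\lm\in U_1$.} Here $a$ is odd and $b$ is even. Then $a^2=1$, $b^2=0$ and $2ab=0$ in $\Z_4$, so $\lm^2=1$.

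\emph{Case $\lm\in U_2$.} Here $a$ is even and $b$ is odd. Then $a^2=0$, $b^2=1$ and $2ab=0$, so $\lm^2 = -1 = 3$.

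No step is a real obstacle; the only point needing care is that the cross term $2abi$ vanishes because the product $ab$ is even in both cases. As a sanity check, the conclusion can also be verified directly on the eight listed units. For example, $(2+i)^2 = 4+4i-1 = 3$ and $(1+2i)^2 = 1+4i-4 = 1$.
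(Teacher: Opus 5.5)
Your proof is correct and takes essentially the same approach as the paper: expand $(a+bi)^2 = a^2-b^2+2abi$, note that the cross term vanishes because exactly one of $a,b$ is even, and use that odd squares are $1$ and even squares are $0$ in $\Z_4$. The paper states this in a single line, and your case split spells out the same argument.
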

\begin{proof}
    Take $\lm = a+bi \in R^\times$. As precisely one of $a,b$ is even, $(a+bi)^2 = a^2 -b^2$. The square of every odd element in $\Z_4$ is $1$ and square of every even element is $0$. Thus $a^2-b^2$ will be $1$ if $a$ is odd and $3$ if $b$ is odd.
\end{proof}

\begin{corollary}\label{caseseq}
    Let $n$ be an odd positive integer. Then we have $\lm^n = \lm$ if and only if 
    \begin{itemize}
        \item $\lm \in U_1$ or
        \item $\lm \in U_2$ and $n \equiv 1 \mod 4$.
    \end{itemize}
         Similarly, we have  $\lm^n = -\lm$ if and only if $\lm \in U_2$ and $n \equiv 3 \mod 4$.
   
\end{corollary}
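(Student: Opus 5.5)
Since $n$ is odd, write $n = 2m+1$, so that $\lm^n = (\lm^2)^m\,\lm$. Lemma \ref{squareunitsR} tells us $\lm^2$ explicitly, and the rest is a case analysis on whether $\lm \in U_1$ or $\lm \in U_2$ (recall $R^\times = U_1 \sqcup U_2$).

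If $\lm \in U_1$, then $\lm^2 = 1$ by Lemma \ref{squareunitsR}. Hence $\lm^n = 1^m \lm = \lm$ for every odd $n$.

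If $\lm \in U_2$, then $\lm^2 = 3 = -1$ in $\Z_4$, so $\lm^n = (-1)^m \lm$. When $n \equiv 1 \pmod 4$, $m$ is even and $\lm^n = \lm$. When $n \equiv 3 \pmod 4$, $m$ is odd and $\lm^n = -\lm$.

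To obtain the biconditionals we also need that $\lm \neq -\lm$ for every unit $\lm$. Indeed, $\lm = -\lm$ would give $2\lm = 0$, so $2 = 2\lm\lm^{-1} = 0$, which is false in $R$. Equivalently, $-\lm = 3\lm$ differs from $\lm$ by $2\lm \neq 0$. Consequently the outcomes $\lm^n = \lm$ and $\lm^n = -\lm$ are mutually exclusive.

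For the first statement, the reverse direction is exactly the cases above. For the forward direction, suppose $\lm^n = \lm$. If $\lm \in U_1$ we are done. If $\lm \in U_2$ and $n \equiv 3 \pmod 4$, then $\lm^n = -\lm \neq \lm$, a contradiction; so $n \equiv 1 \pmod 4$.

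For the second statement, suppose $\lm^n = -\lm$. If $\lm \in U_1$, then $\lm^n = \lm \neq -\lm$, a contradiction. If $\lm \in U_2$ with $n \equiv 1 \pmod 4$, the same contradiction arises. Hence $\lm \in U_2$ and $n \equiv 3 \pmod 4$. The converse was shown above.

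The only step needing any care is $\lm \neq -\lm$, which follows because $2$ is not a zero divisor against units in $R$. The rest is bookkeeping.
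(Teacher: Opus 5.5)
Your proof is correct and follows the same route as the paper's brief argument: get $\lm^{n-1}=(\lm^2)^{(n-1)/2}$ from Lemma \ref{squareunitsR}, then multiply by $\lm$. Your extra observation that $\lm\neq-\lm$, because $2\neq 0$ in $R$ and $\lm$ is a unit, is what makes the ``only if'' directions rigorous; the paper leaves that step implicit.
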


% \begin{proof}
%     By Lemma \ref{squareunitsR}, $\lm^m = 1$ for all positive even integers $m$ and units $\lm$, except when $\lm \in U_2$ and $m \equiv 2 \mod 4$, where $\lm^m =-1$. This implies for odd positive integers $n$, $\lm^n = \lm$ except in that case, where $\lm^n = -\lm$.
% \end{proof}
With this in hand, we can set up monomial transformations to cyclic codes.

\begin{proposition}\label{equivalencesR}
    Let $n$ be an odd positive integer. For any unit $\lm$ let $\mu = \pm \lm$ such that $\lm = \mu^n$ according to Corollary \ref{caseseq}. Then the following map is a ring isomorphism that is also a monomial transformation:
    \[\eta: R[x]/\langle x^n-\lm \rangle \to R[x]/\langle x^n- 1 \rangle\]
        \[\overline g(x) \mapsto \overline g(\mu x).\]

\end{proposition}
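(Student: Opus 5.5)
The natural route is to view $\eta$ as induced by the $R$-algebra endomorphism $\phi: R[x]\to R[x]$, $g(x)\mapsto g(\mu x)$. First, by Corollary~\ref{caseseq}, $\mu\in\{\lm,-\lm\}$ satisfies $\mu^n=\lm$. Indeed, if $\lm^n=\lm$ take $\mu=\lm$. If $\lm^n=-\lm$, then $n$ odd gives $(-\lm)^n=-\lm^n=\lm$, so take $\mu=-\lm$. In both cases $\mu$ is a unit.

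Since $\mu$ is a unit, $\phi$ is a ring automorphism of $R[x]$ with inverse $g(x)\mapsto g(\mu^{-1}x)$. To see that $\eta$ is well defined, I would compute
\[\phi(x^n-\lm)=\mu^nx^n-\lm=\lm x^n-\lm=\lm(x^n-1).\]
Because $\lm$ is a unit, this shows $\phi(\langle x^n-\lm\rangle)=\langle x^n-1\rangle$ as ideals of $R[x]$. It follows that $\phi$ descends to an isomorphism
\[R[x]/\langle x^n-\lm\rangle\to R[x]/\langle x^n-1\rangle,\]
which is exactly $\eta$, and its inverse is induced by $x\mapsto\mu^{-1}x$. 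The only point needing care is this ideal equality: both inclusions come from the displayed identity together with the fact that $\phi^{-1}(x^n-1)=\lm^{-1}(x^n-\lm)$.

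For the monomial-transformation property, take a minimal-degree representative $g=\sum_{j=0}^{n-1}c_jx^j$, which has degree less than $n$. Then $g(\mu x)=\sum_j \mu^j c_j x^j$ still has degree less than $n$, so it is already the minimal-degree representative in the target ring and no reduction modulo $x^n-1$ occurs. On coordinates, $\eta$ therefore sends $(c_0,\dots,c_{n-1})$ to $(c_0,\mu c_1,\dots,\mu^{n-1}c_{n-1})$. This is a monomial transformation with identity permutation and units $u_j=\mu^{j-1}$, and it is bijective. Its restriction to any ideal, i.e.\ any $\lm$-constacyclic code, maps that code onto its image, which is an ideal of $R[x]/\langle x^n-1\rangle$ and hence a cyclic code. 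None of these steps presents a real obstacle; the only thing to watch is choosing the sign of $\mu$ correctly via Corollary~\ref{caseseq} so that $\mu^n=\lm$.
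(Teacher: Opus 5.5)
Your proof is correct and follows essentially the same route as the paper: well-definedness from $\mu^n x^n-\lambda=\lambda(x^n-1)$, then bijectivity, then coefficient-wise scaling by powers of $\mu$ for the monomial property. The one difference is that you get bijectivity from the explicit inverse substitution $x\mapsto\mu^{-1}x$, where the paper uses a trivial kernel plus a cardinality count. Your remark that $g(\mu x)$ needs no reduction modulo $x^n-1$ also makes the monomial claim slightly more precise than the paper's version.
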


\begin{proof}
    First we check that $\eta$ is well-defined,

\[\eta(g(x) + h(x)(x^n-\lm) + \langle x^n - \lm \rangle) = g(\mu x) + h(\mu x)(\mu ^n x^{n}-\lm) + \langle x^n -1\rangle =\]
\[g(\mu x) + h(\mu x)(\lm x^n -\lm) + \langle x^n - 1 \rangle =
g(\mu x) + \langle x^n - 1 \rangle = \eta(g(x) + \langle x^n - \lm \rangle)\]

So the choice of representative does not matter.
This is trivially a homomorphism, to argue this function is an isomorphism one easily finds
\[\eta(g(x) + \id{x^n-\lm}) = \id{x^n-1} \leftrightarrow g(\mu x) \in \id{x^n-1} \leftrightarrow g(x) \in \id{x^n-\lm}\]
so the kernel is trivial, and surjectivity follows from the fact that their cardinality is equal.
The map $\eta$ is clearly a monomial transformation, as each term of the polynomial is mapped to a scalar multiple of itself.

    % To see that its a monomial transformation is not hard, each term $\alpha_k x^k$ of $g$ will just be sent to $\lm^{-k}\alpha_k x^k$, so this is a scaling by units of each coefficient.
    % We need to check well-definedness, but first notice its clear that this respects addition and multiplication. Moreover

    % \[x^n-1 + \langle x^n-1 \rangle \mapsto \lambda^{-n}x^n-1 + \langle x^n - \lm \rangle = \lm (x^n-\lambda) + \langle x^n-\lm \rangle = 0 + \langle x^n - \lm \rangle\]

    % With this we essentially get well-definedness as it implies adding $x^n-1$ to the representative we choose does not alter the result. So it is independent of representative.
    % It is a bijection as the map is invertible by $h + \langle x ^n - \lm \rangle \mapsto h(\lm x) + \langle x^n - 1 \rangle$.
\end{proof}

% The case $\lm^n = - \lm$ case is very similar, as seen in the following Theorem.

% \begin{theorem}
%     Let $\lm$ and $n$ as in the second case of Corollary \ref{caseseq}. Where $\lm^n = - \lm$ and $\lm^{-1} = - \lm$. Then the following map is a ring isomorphism that is a monomial transformation.

%     \[\eta: R/\langle x^n - 1 \rangle \to R/\langle x^n - \lm \rangle\]
%     \[g(x) \mapsto g(\lambda x)\]
% \end{theorem}
% \begin{proof}
%     We skip most of the proof as it follows in a completely analogous way, we just do the check for well-definedness.

%     \[x^n-1 + \langle x^n-1 \rangle \mapsto \lm^nx^n - 1 + \langle x^n - \lm \rangle = -\lm(x^n - \lm) + \langle x^n - \lm \rangle = 0 + \langle x^n - \lm \rangle\]
% \end{proof}

Having an isomorphism that is a monomial transformation implies that restricting this map (or its inverse) to any $\lm$-constacyclic code $C \subset R/\langle x^n - \lm \rangle$ we get an equivalent cyclic code and vice-versa. Therefore, the cyclic and $\lm$-constacyclic codes for any unit $\lm$ are essentially the same.

Thus, restricting to cyclic codes is sufficient for finding constacyclic codes with good parameters over 
$R$.

\section{Hensel's Lift: Factoring $x^n- \lm$ over $R[x]$}\label{lifting}

In this section, the known factorization of $x^n-1$ over the residue field $k\cong \F_2$ will be lifted to the unique factorization of any $x^n-\lm$ as described in Proposition \ref{uniquefact2}. As usual, we assume $\lm$ to be a unit of $R$ and $n$ coprime to the characteristic of $R$, this means we \textbf{consider $n$ to be odd throughout this section}.

First, we cover the case where $\lm = 1$, starting from the factorization of $x^n-1$ in $\F_2[x]$ and extending it to $R[x]$ with a simple iterative procedure. Then we map this factoring to the rest of cases for $\lm$ by exploiting the isomorphisms found in Proposition \ref{equivalencesR}. Finally, using this, we develop a Magma script to find this factorization for any such unit $\lm$ and length $n$.

    We have the following ring isomorphism
    \[R/\langle 2 \rangle \cong \F_2 + i \F_2\]
This intermediate ring between $\F_2$ and $R$ will be very useful for this process, as it turns out the factoring of $x^n-1$ over $R/\langle 2 \rangle$ is trivial. This means the only challenge is to fix the error modulo $2$ from this factoring. Note when we write $p(x)$ modulo $a \in R$, it is meant its reduction modulo the ideal $aR[x]$.
We will use the inclusion homomorphism to prove the triviality of the factoring over the intermediate ring $\F_2 + i\F_2$,
\[\Phi: \F_2[x] \to (\F_2 + i \F_2)[x]\]
this map keeps the polynomial the same, moving the coefficients to the bigger ring.

\begin{proposition}\label{skiplift}
    Let $f$ be a polynomial in $\F_2[x]$ and consider a factorization $f =\hat g\hat h$ for some coprime $\hat g, \hat h \in \F_2[x]$. If we take the polynomials $g = \Phi(\hat g)$ and $h = \Phi(\hat h)$, then $\Phi (f) = gh$ and $g,h$ are coprime.
\end{proposition}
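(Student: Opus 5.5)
The first claim follows because $\Phi$ is a ring homomorphism: it is the map on polynomial rings induced by the inclusion $\F_2 \hookrightarrow \F_2+i\F_2$, and it acts coefficientwise. Applying $\Phi$ to $f=\hat g\hat h$ gives $\Phi(f)=\Phi(\hat g)\Phi(\hat h)=gh$ immediately.

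For coprimality I will use the Bézout characterization, which is the right notion over the non-field ring $\F_2+i\F_2$: $g$ and $h$ are coprime when $\langle g\rangle+\langle h\rangle$ is the whole polynomial ring, that is, when $ag+bh=1$ for some polynomials $a,b$. Since $\F_2[x]$ is a Euclidean domain and $\hat g,\hat h$ are coprime there, I will choose $\hat a,\hat b\in\F_2[x]$ with $\hat a\hat g+\hat b\hat h=1$. Applying the ring homomorphism $\Phi$, which sends $1$ to $1$, gives
\[
\Phi(\hat a)\,g+\Phi(\hat b)\,h=1
\]
in $(\F_2+i\F_2)[x]$. Hence $\langle g\rangle+\langle h\rangle=(\F_2+i\F_2)[x]$, so $g$ and $h$ are coprime.

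The only point needing care is which definition of coprime to use. The ring $\F_2+i\F_2\cong \F_2[u]/\langle u^2+1\rangle=\F_2[u]/\langle (u+1)^2\rangle$ is not a field, so ``coprime'' must be read in the ideal sense above, consistent with its use in Hensel's Lift and Lemma \ref{uniquefact}. A gcd-based definition would be problematic here, since gcds are not well behaved over this ring. Once the Bézout sense is fixed, the proof is just the transport of a Bézout identity along a unital ring homomorphism, and I do not expect any real obstacle.
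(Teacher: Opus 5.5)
Your proposal is correct and follows the paper's proof essentially verbatim. The paper likewise gets $\Phi(f)=gh$ from multiplicativity of $\Phi$, and obtains coprimality by pushing a Bézout identity $1=\hat\alpha\hat g+\hat\beta\hat h$ from $\F_2[x]$ through the unital homomorphism $\Phi$.
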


\begin{proof}
    The homomorphism $\Phi$ preserves multiplication, therefore
    \[\Phi(f) = \Phi\left( \hat g\hat h\right) =  g h\]
    Moreover, as $\hat g$ and $\hat h$ are coprime, we have that there exist polynomials $\hat \alpha, \hat\beta \in \F_2[x]$ such that
    \[1 = \hat \alpha \hat g + \hat\beta\hat h\]
    Then define $\alpha = \Phi(\hat \alpha)$ and $\beta =\Phi(\hat \beta)$. Taking the homomorphism $\Phi$, we get
    \[1 = \Phi(1) = \Phi(\hat \alpha \hat g + \hat\beta\hat h) = \alpha g + \beta h\]
    Concluding $g,h$ are coprime as well.
\end{proof}

In particular, as mentioned before, if we take $f = x^n-1 \in \F_2[x]$, then we clearly find $\Phi(f) = x^n-1\in (\F_2+i\F_2)[x]$ so we already have a factorization of the polynomial $x^n-1$ over the ring $R/\langle 2 \rangle$ simply from the well known factorization over the field $\F_2$. All that is left is to lift this factorization to the ring $R$.
One important aspect about the proof of Proposition \ref{skiplift} is that the polynomials $\alpha,\beta$ in the Bezout's identity can be directly obtained from $\hat \alpha, \hat \beta$ for the same identity in $\F_2[x]$, this is used in the Magma script to produce new codes.

Now consider a monic reducible polynomial $f \in R[x]$ with real coefficients. If we take its reduction modulo $2$, we can obtain  some $g,h\in (\F_2+i\F_2)[x]$ by Proposition \ref{skiplift} such that
\[f \equiv gh \pmod { 2 }\]
\[\alpha g + \beta h \equiv 1 \pmod { 2 } \]
Where $g,h$ are coprime factors coming from the factoring of $f \pmod{1+i}$ and $\alpha, \beta$ as obtained in Proposition \ref{skiplift}. Note we require $f$ to have real coefficients so that its reduction $f \pmod 2$ is in the image of the canonical map $\Phi$, in particular $\Phi(f \pmod{1+i}) = f \pmod 2$, so we can apply the factoring in Proposition \ref{skiplift}. Then we can write
\[\alpha g + \beta h = 1+2c\]
\[f-gh =  \Delta = 2 \cdot k\]
Where, by slight abuse of notation, we now consider all polynomials $f,g,h$ to live inside $R[x]$ with the same coefficients. Here $c,k \in R[x]$ and $\Delta\in R[x]$ was introduced as the difference between the polynomial $f$ and the decomposition $gh$ in $R[x]$.
Now let $s = \alpha \cdot \Delta$, $t = \beta \cdot \Delta$.
\[gs + ht = \Delta + 2 \Delta c = \Delta + 4ck = \Delta\]
where $4ck = 0$ as $\Char(R) = 4$.
Moreover, as $s,t \equiv 0 \pmod 2$, we have $st = 0$ and therefore
\[(g +t)(h+s) = gh +(gs + ht) + st = f-\Delta + \Delta  = f\]

We found a valid factoring. However, there is one last change to make, as well as checking these are coprime. These choices of $s$ and $t$ might not always result in a monic decomposition. In general, $s$, $t$ contain high degree terms with coefficients of $2$, which prevent $g+t$ and $h+s$ from being monic. We would wish this resulting decomposition to be monic.

\begin{proposition}\label{s'prop}
  Let $s,t,g,h$ and $\Delta$ be as above.  Define the polynomials $s',t' \in R[x]$ as follows:  $s'$ is the remainder when dividing $s$ by $h$, which is possible as $h$ is monic, hence 
    \[s' \equiv s\mod h\]
    \[t' = \frac{\Delta-gs'}{h}\]
    Set $g_2 = g+t'$ and $h_2 = h+s'$, then we have $g_2$ and $h_2$ are monic and

    \[f = g_2h_2\]
\end{proposition}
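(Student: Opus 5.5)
We need to show that $t'$ is a genuine polynomial, that $g_2,h_2$ are monic, and that $f=g_2h_2$.

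\textbf{Step 1: $h$ divides $\Delta - gs'$.} Since $h$ is monic, division with remainder works in $R[x]$, so we can write $s = qh + s'$ with $\deg s' < \deg h$. From the identity $gs+ht=\Delta$ established above we get
\[\Delta - gs' = \Delta - gs + gqh = ht + gqh = h(t+gq).\]
So $h \mid \Delta - gs'$, and $t' = t+gq$ is a well-defined polynomial in $R[x]$. Because $h$ is monic, the quotient is unique, so this agrees with the definition of $t'$ in the statement.

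\textbf{Step 2: $f = g_2h_2$.} The key point is that everything is divisible by $2$. We have $s=\alpha\Delta$ and $\Delta = 2k$, so $s \equiv 0 \pmod 2$. Division by the monic $h$ preserves this: dividing $s=2\sigma$ by $h$ gives quotient $2q_0$ and remainder $2r_0$, and uniqueness of division by a monic polynomial forces $q=2q_0$ and $s'=2r_0$. Likewise $t' = t + gq$ is divisible by $2$, since $t=\beta\Delta$ and $q$ are. Hence $s't' \in 4R[x]=0$. Expanding,
\[(g+t')(h+s') = gh + gs' + ht' + s't' = gh + gs' + (\Delta - gs') = gh+\Delta = f.\]

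\textbf{Step 3: monicity.} Since $\deg s' < \deg h$, the polynomial $h_2 = h + s'$ is monic of degree $\deg h$. For $g_2$, compare degrees. Both $f$ and $gh$ are monic of degree $\deg f = \deg g + \deg h$, so $\deg \Delta < \deg g + \deg h$. Also $\deg(gs') < \deg g + \deg h$. Therefore $\deg(\Delta - gs') < \deg g + \deg h$. Dividing by the monic $h$ gives $\deg t' < \deg g$, so $g_2 = g + t'$ is monic.

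The only real subtlety is the $2$-divisibility of $q$ in Step 2, which is what makes $s't'$ vanish. It rests on the uniqueness of division by a monic polynomial over $R$, which holds because the leading coefficient $1$ is a unit.
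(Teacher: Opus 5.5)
Your proof is correct and follows essentially the same route as the paper: write $s=qh+s'$, identify $t'=t+qg$, show $q\equiv 0 \pmod 2$ so that $s't'=0$, and expand $(g+t')(h+s')$. The differences are minor: you obtain $2\mid q$ from uniqueness of division by the monic $h$ applied to $s=2\sigma$, which is cleaner than the paper's reduction-mod-$2$ degree comparison (that argument wrongly calls $s'$ monic). You also get monicity of $g_2$ from the bound $\deg t'<\deg g$, using that $g$ is monic with $\deg g+\deg h=\deg f$ (true in this setup), rather than from $f=g_2h_2$ with $f,h_2$ monic.
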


\begin{proof}
    Let $s = s' +qh$, then
    \[t' = \frac{\Delta-gs'}{h} = \frac{gh + g(s'+qh) + ht-gs'}{h}=t+qg\]
    
    As $s'$ is the remainder after division by $h$, necessarily $\deg(s')<\deg(h)$. Then $h_2 = h+s'$ is the sum of a monic polynomial $h$ and a polynomial of lower degree, $s'$, thus making $h_2$ itself monic.
    Taking reductions modulo $2$ on $s = s' + qh$, we obtain $\overline{s'}=\overline{qh}$. Knowing $h$ and $s'$ are monic, $\deg(\overline h)=\deg(h)$ and $\deg(s') = \deg(\overline{s'})$. Then if $\overline q \neq \overline 0$, we would have $\deg(\overline{qh})=\deg(\overline{s'})> \deg(\overline{h})$. Implying $\deg(s')>\deg(h)$, a contradiction.
    Thus we can conclude $q \equiv 0 \pmod 2$. This fact implies both $s',t' \equiv 0 \pmod 2$ and expanding out we get
    \[g_2h_2 = (g+t')(h+s') = gh +ht'+gs' +t's'\]
    \[=gh +ht'+gs' =gh + h(t+qg)+g(s-qh) = gh + \Delta = f\]
    where we used $s't' = 0$ as we are working in characteristic $4$, and $\Delta =gs+ht$.
    Finally, as $f =g_2h_2$ and both $f$ and $h_2$ are monic, $g_2$ is monic as well.

\end{proof}

Now, recall the objective is to use this process to obtain a factorization of $x^n-1$ in the ring $R$. Of course, we will usually have more than two factors in the factorization of $x^n-1$ in $\F_2$, but if $x^n-1 = f_1...f_r$ in $\F_2$, one can take $g = f_1$, and $h=f_2...f_r$ and perform this process. Then we will obtain an irreducible $g_2$ and we can perform the process all over again with $h$, by taking the coprime polynomials $f_2$ and $f_3...f_r$ until we lift all individual polynomials $f_i$ one by one. This resulting factorization will be precisely the one ensured by the Hensel's Lift, as summarized in the next result.

\begin{lemma}\label{liftingthmyes}
    Let $x^n-1 = \tilde f_1...\tilde f_r$ in $\F_2[x]$, where $\tilde f_i \in \F_2[x]$ are irreducible and pairwise coprime, and let $f_i$ be the lifts obtained by the iterative process described in this section. Then the polynomials $f_i$ are monic, pairwise coprime, basic irreducible and $x^n-1 = f_1...f_r$ is the unique such factoring of $x^n-1$ in $R[x]$.
\end{lemma}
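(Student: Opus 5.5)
The plan is an induction on the number of factors, using Proposition \ref{skiplift} and Proposition \ref{s'prop} at each step, and then Proposition \ref{uniquefact2} for uniqueness.

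\textbf{Setting up the induction.} Put $H_0 = x^n-1$. At step $j$ ($1\le j\le r-1$) we have a monic polynomial $H_{j-1}\in R[x]$ with real coefficients, and its reduction modulo $1+i$ is $\tilde f_j\cdots\tilde f_r$. We apply the procedure with $\hat g = \tilde f_j$ and $\hat h = \tilde f_{j+1}\cdots\tilde f_r$. These are coprime in $\F_2[x]$, because $x^n-1$ is square-free over $\F_2$ (see the proof of Proposition \ref{uniquefact2}). Proposition \ref{s'prop} then gives monic $g_2,h_2$ with $H_{j-1}=g_2h_2$. We set $f_j=g_2$ and $H_j=h_2$, and at the end $f_r=H_{r-1}$. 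Multiplying out gives $x^n-1=f_1\cdots f_r$, and every $f_j$ is monic.

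\textbf{The real-coefficient hypothesis.} The procedure needs $H_{j-1}$ to have real coefficients, and this is the one point needing care. All of $g,h,\alpha,\beta$ come from $\F_2[x]$, so they are real. $\Delta$ is real whenever $H_{j-1}$ is. Then $s,t$ are real, and since division by the monic real polynomial $h$ stays in $\Z_4[x]$, so are $s',t'$. Hence $g_2,h_2$ are real and the induction continues. If this bookkeeping gave trouble, there is a fallback: run the whole process inside $\Z_4[x]\subset R[x]$.

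\textbf{Reductions of the $f_j$.} In the proof of Proposition \ref{s'prop} we showed $s',t'\equiv 0 \pmod 2$. So $f_j\equiv g \pmod 2$, and therefore $\widetilde{f_j}=\tilde f_j$ in $k[x]=\F_2[x]$.

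\textbf{Basic irreducible.} Each $\tilde f_j$ is irreducible, so $f_j$ is basic irreducible. Indeed, suppose $f_j=ab$ with $a,b$ non-units. Reducing gives $\tilde f_j=\tilde a\tilde b$. A polynomial whose reduction is a nonzero constant is a unit in $R[x]$, because the remaining coefficients are nilpotent. So we may take $a,b$ monic (up to units), and their degrees then add, which contradicts irreducibility of $\tilde f_j$.

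\textbf{Pairwise coprime.} For $j\ne l$, $\tilde f_j$ and $\tilde f_l$ are coprime, so $A\tilde f_j+B\tilde f_l=1$ in $\F_2[x]$. Lifting gives $Af_j+Bf_l=1+(1+i)e$ for some $e\in R[x]$. The element $1+(1+i)e$ is a unit in $R[x]$ because $1+i$ is nilpotent. Hence $f_j$ and $f_l$ are coprime.

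\textbf{Uniqueness.} Proposition \ref{uniquefact2} (that is, Lemma \ref{uniquefact}) says a factorization of $x^n-1$ into monic, pairwise coprime, basic irreducible polynomials is unique. So $x^n-1=f_1\cdots f_r$ is that factorization.

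The main obstacle is confirming that the real-coefficient invariant survives each iteration, so that Proposition \ref{skiplift} applies at every stage. The remaining steps are routine.
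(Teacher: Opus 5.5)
Your proof is correct and is essentially the paper's argument: Proposition~\ref{s'prop} gives monicity and $s',t'\equiv 0 \pmod 2$ (hence the correct reductions modulo $1+i$ for both $g_2$ and $h_2$), coprimality comes from a B\'ezout combination that equals $1$ plus a nilpotent and is therefore a unit, and uniqueness is Lemma~\ref{uniquefact}. You differ only in two ways: you get pairwise coprimality directly from the coprimality of the $\tilde f_j$ rather than step by step from $\alpha g_2+\beta h_2=1+2V$, and you spell out the real-coefficient invariant and the basic irreducibility that the paper leaves implicit (in the latter, drop the unjustified ``we may take $a,b$ monic'': your unit criterion alone already forces $\deg\tilde a,\deg\tilde b\ge 1$, contradicting irreducibility of $\tilde f_j$).
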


\begin{proof}
    It is enough to show that at every step $k$, the polynomials $g_2,h_2$, lifted from $g = \tilde f_k$ and $h = \tilde f_{k+1}...\tilde f_r$ are coprime, monic, and that $g_2 \equiv  g \pmod{1+i}$ and $h_2 \equiv h \pmod{1+i}$.
    Monicity was already proved in Proposition \ref{s'prop}, in that proof we also obtained that $s',t' \equiv 0 \pmod 2$. Because $\langle2 \rangle \subset \id{1+i}$, we then have
    \[g_2 = g+t' \equiv g\pmod{1+i}\]
    \[h_2 =h+s'\equiv h \pmod{1+i}\]
    To show coprimality of $g_2$ and $h_2$ take
    \[\alpha g_2+\beta h_2 = (\alpha g + \beta h) +(\alpha s' + \beta t') = 1+2V\]
    where $V$ is some polynomial in $R[x]$, this is because of the congruences $\alpha g + \beta h \equiv 1 \pmod2$ and $s',t' \equiv 0 \pmod 2$.
    But then as $(1+2V)(1-2V)=1$, we get
    \[\alpha(1-2V)g_2 + \beta(1-2V)h_2 = 1\]
    concluding $g_2$ and $h_2$ are coprime.
\end{proof}

We now have developed an algorithm to obtain the Hensel's Lift of the factorization of $x^n-1$ over $\F_2[x]$ to $R[x]$ for any odd positive integer $n$.

\begin{example}\label{lastrunning1}
    Take $n =15$, using this procedure with the Magma script, we can find the factoring of $x^{15}-1$ in $R[x]$ as
    \[x^{15}-1 = (x+3)(x^2+x+1)(x^4+2x^2+2x+1)(x^4+3x^3+2x^2+1)(x^4+x^3+x^2+x+1)\]
    One can verify this is a valid monic factoring, and Lemma \ref{liftingthmyes} confirms this is the unique factoring into monic, pairwise coprime, basic irreducible polynomials.
\end{example}

We would like to be able to extend this to factor $x^n-\lm$ for any unit $\lm$, this turns out to be a simple task, as factorization of $x^n-\lm$ is very closely related to the factorization of $x^n-1$ in this ring $R[x]$ as was seen in Proposition \ref{equivalencesR}.

\begin{lemma}\label{constamaplift}
    Define the map
    \[T_{\lm} : R[x] \to R[x]\]
    \[T_{\lm}(g)(x) = \lm^{-\deg(g)}g(\lm x)\]
    Then, if $x^n-1 = f_1...f_r$ is the Hensel's Lift of the factorization in $\F_2[x]$, $x^n-\lm = T_{\lm}(f_1)...T_{\lm}(f_r)$ is the Hensel's Lift of $x^n-\lm$.
\end{lemma}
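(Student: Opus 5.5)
The plan is a direct verification: compute the product $T_{\lm}(f_1)\cdots T_{\lm}(f_r)$ explicitly, check that each factor is monic and reduces modulo $1+i$ to the same irreducible polynomial as $f_i$, and then invoke the uniqueness in Lemma \ref{uniquefact} and Proposition \ref{uniquefact2}.

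\emph{Step 1 (the product).} The key observation is that $T_{\lm}$ is multiplicative on polynomials whose leading coefficients are units. Indeed, if $g,h$ have unit leading coefficients, then $\deg(gh)=\deg g+\deg h$, so $T_{\lm}(gh)=T_{\lm}(g)\,T_{\lm}(h)$. Applying this to $x^n-1=f_1\cdots f_r$ gives
\[T_{\lm}(f_1)\cdots T_{\lm}(f_r)=T_{\lm}(x^n-1)=\lm^{-n}\bigl(\lm^n x^n-1\bigr)=x^n-\lm^{-n}.\]
So the whole question is whether $\lm^{-n}=\lm$.

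\emph{Step 2 (when $\lm^{-n}=\lm$).} I would settle this with Lemma \ref{squareunitsR} and Corollary \ref{caseseq}, and I expect this to be the main obstacle.
\begin{itemize}
\item If $\lm\in U_1$, then $\lm^2=1$, so $\lm^{-1}=\lm$ and $\lm^{-n}=\lm^n=\lm$. The statement then holds exactly as written.
\item If $\lm\in U_2$, then $\lm^2=3=-1$, so $\lm^{-1}=-\lm$ and $\lm^{-n}=-\lm^n$.
\begin{itemize}
\item For $n\equiv 3 \pmod 4$, Corollary \ref{caseseq} gives $\lm^n=-\lm$, so $\lm^{-n}=\lm$ and the statement holds.
\item For $n\equiv 1 \pmod 4$, we get $\lm^{-n}=-\lm$, so the product is $x^n+\lm$ rather than $x^n-\lm$.
\end{itemize}
\end{itemize}
In the last case the map must be taken with a sign-adjusted unit, in the spirit of the $\mu$ of Proposition \ref{equivalencesR}. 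I would therefore prove the statement with $\lm$ replaced, where necessary, by the unit $\nu\in\{\lm,-\lm\}$ satisfying $\nu^{-n}=\lm$. Such a $\nu$ always exists: $\nu=-\lm$ works in the remaining case, since $(-\lm)^{-n}=-(-\lm)^n=\lm^n=\lm$. I would state this caveat explicitly in the proof.

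\emph{Step 3 (properties of the factors).} With the correct $\nu$, the polynomials $T_\nu(f_i)$ have the required properties.
\begin{itemize}
\item They are monic: the leading coefficient is $\nu^{-\deg f_i}\nu^{\deg f_i}=1$.
\item Every unit of $R$ reduces to $1$ in $k\cong\F_2$, so $\widetilde{T_\nu(f_i)}=\tilde f_i$. Hence each $T_\nu(f_i)$ is basic irreducible.
\item They are pairwise coprime. One route is that their reductions are pairwise coprime and coprimality lifts, as in the argument of Lemma \ref{liftingthmyes}. Alternatively, substitute $x\mapsto\nu x$ into a Bezout identity $a f_i+b f_j=1$ and rescale by units.
\end{itemize}

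\emph{Step 4 (conclusion).} Proposition \ref{uniquefact2} guarantees that the factorization of $x^n-\lm$ into monic, pairwise coprime, basic irreducible polynomials is unique. That unique factorization is the Hensel's Lift, whose factors reduce to the $\tilde f_i$. The factorization $\prod T_\nu(f_i)$ from Steps 1–3 is of this kind, so it is the Hensel's Lift of $x^n-\lm$.
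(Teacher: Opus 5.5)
Your proof follows the paper's argument step for step: multiplicativity of $T$ on the monic $f_i$, monicity of the images, unchanged reduction modulo $1+i$, transport of B\'ezout identities, and uniqueness via Proposition \ref{uniquefact2}. It is correct. The substantive difference is Step 2, and there you are right and the paper is not.

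The paper's proof asserts that $\lm^{-n}=\lm$ for every odd $n$, citing Corollary \ref{caseseq}. But for $\lm\in U_2$ one has $\lm^{-1}=-\lm$, hence $\lm^{-n}=-\lm^{n}=-\lm$ when $n\equiv 1\pmod 4$. For instance, $T_i(x^5-1)=x^5-i^{-5}=x^5+i$. So the lemma as literally stated is false in that case.

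Your replacement of $\lm$ by the unit $\nu\in\{\lm,-\lm\}$ with $\nu^{-n}=\lm$ is exactly the needed correction. Equivalently, $\nu=\mu^{-1}$ for the $\mu$ of Proposition \ref{equivalencesR}. With that change, the rest of your argument goes through verbatim for $T_\nu$.

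Your restriction of multiplicativity to polynomials with unit leading coefficient is also a precision the paper omits. The identity $\deg(gh)=\deg g+\deg h$ can fail over $R$, e.g.\ $(1+2x)^2=1$.

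The failure case is not vacuous for the paper's own data. Let $F=x^8+2x^6+3x^5+x^4+3x^3+2x^2+1$ be the lift of $x^8+x^5+x^4+x^3+1$. The generator listed for code 4 of Table \ref{findings} ($n=17$, $\lm=i$) equals $T_{-i}(F)$, which divides $x^{17}-i$. By contrast, $T_i(F)$ divides $x^{17}+i$. So the table agrees with your corrected version, not with the lemma as stated.
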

\begin{proof}
    First, recall that for $\lm$ a unit, $\lm^{-1} = \pm\lm$ depending on the conditions in Lemma \ref{squareunitsR} and that, for all odd $n$, $\lm^{-n} = \lm$ as expressed in Corollary \ref{caseseq}. Then $T_{\lm}(x^n-1) = \lm^{-n}(\lm^nx^n-1) = x^n-\lm^{-n} = x^n - \lm$. This map is also multiplicative
    \[T_{\lm}(gh) = \lm^{-\deg(g)-\deg(h)}g(\lm x)h(\lm x) = T_{\lm}(g)T_{\lm}(h)\]
    Therefore, $x^n-\lm = T_{\lm}(f_1)...T_{\lm}(f_r)$. These polynomials are monic by construction, as by definition of $T_{\lm}$, if $g$ is monic then so is $T_{\lm}(g)$. Note this map can be inverted by
    \[T_{\lm}^{-1} : R[x] \to R[x]\]
    \[T_{\lm}^{-1}(g)(x) = \lm^{\deg(g)}g(\lm^{-1} x)\]
    which is also multiplicative in a similar way. Therefore, $g, h$ are coprime if and only if $T_{\lm}(g), T_{\lm}(h)$ are coprime, as B\'ezout's identity can be translated from one side to the other using the map $T_{\lm}$ or its inverse.
    
    Finally, they are also basic irreducible. To see this, recall reduction modulo $1+i$ sends units to $1$ and non units to $0$. Then
    \[\widetilde{T_{\lm}(f_i)} = \widetilde{\lm^{-\deg(f_i)}f_i(\lm x)} = \tilde{f_i}\]
    so $T_{\lm}(f_i)$ is basic irreducible if and only if $f_i$ is basic irreducible as their reduction to the residue field is the same.

    Concluding, if $x^n-1 = f_1...f_r$ is the factoring of $x^n-1$ into monic, pairwise coprime, basic irreducible polynomials; then $x^n-\lm = T_{\lm}(f_1)...T_{\lm}(f_r)$ is the factoring of $x^n-\lm$ into monic, pairwise coprime, basic irreducible polynomials.
\end{proof}

\begin{example}\label{lastrunning2}
    We found the factoring of $x^{15}-1$ in Example \ref{lastrunning1}. Using these maps we find
    \[x^{15}-i = T_i(x^{15}-1) = (x+i)(x^2+3ix+3)(x^4+2x^2+3ix+1)(x^4+ix^3+2x^2+1)(x^4+3ix^3+3x^2+ix+1)\]
    This factoring is again, the unique factoring into monic, pairwise coprime, basic irreducible polynomials. By choosing the proper value of $\lambda$ and using the Magma script one can find this factoring.
\end{example}

Therefore, for any odd positive integer $n$ and any unit $\lm$, we can obtain the desired factorization of $x^n-\lm$ in $R[x]$ by first factoring $x^n-1$ using the iterative process  in Lemma \ref{liftingthmyes}, and then mapping it to the factorization of $x^n-\lm$ using the map $T_{\lm}$ in Lemma \ref{constamaplift}.

\section{Finding $\Z_4$ codes}\label{sec: findings}

The aim of this last section is to use the procedure developed in the previous section to obtain all free constacyclic codes over $\Z_4 + i \Z_4$ and map them to $\Z_4$ codes. As mentioned before, one special aspect about quaternary codes is the ability to map them to binary codes, done via the Gray map \cite{wan1997quaternary}.
\[ \phi: \Z_4 \to \F_2^2\]
\[0 \mapsto 00\]
\[1 \mapsto 01\]
\[2 \mapsto 11\]
\[3 \mapsto 10\]
Let $\phi = (\phi_1, \phi_2)$ be the decomposition of $\phi$ on its two coordinates. Then we extend the Gray map to $\Z_4^n$ as
\[\phi :\Z_4^n \to \F_2^{2n}\]
\[(x_1,...,x_n) \mapsto (\phi_1(x_1),...,\phi_1(x_n),\phi_2(x_1),...,\phi_2(x_n))\]
This map is clearly a bijection, but it is not an additive group homomorphism, this is why linearity is often lost and some non-linear binary codes can be expressed as images of the Gray map.
While the Hamming weight is still well defined in $\Z_4$ codes, in the spirit of this Gray map, it is common to rather use the Lee weight.

\begin{defn}
    The \textbf{Lee weight} $w_L(c)$ of a codeword $c \in \Z_4^n$ is defined to be the quantity $w_L(c) := w_H(\phi(c))$. The \textbf{Lee distance} between two codewords $c_1,c_2 \in \Z_4^n$ is defined as $d_L(c_1,c_2) := w_L(c_1-c_2)$. The \textbf{minimum Lee distance} of a code $C \subset \Z_4^n$, denoted $d_L(C)$, is the smaller Lee distance that appears between codewords in $C$.
\end{defn}

Alternatively, from its definition it is clear that $w_L(c) = n_1(c) +2n_2(c) + n_3(c)$ where $n_i(c)$ is the number of appearances of the element $i$ in the codeword $c$.

Similarly, we will map codes over $R$ to codes over $\Z_4$ in an analogous way. These maps are also referred to as Gray maps. The following Gray map is commonly used to map from extensions of $\Z_4$ \cite{Aydin2024}
\[\varphi:R \to \Z_4^2\]
\[a+bi \mapsto (b,a+b)\]
extended component-wise as
\[\varphi:R^n \to \Z_4^{2n}\]
\[(x_1,...,x_n) \mapsto (b_1,...,b_n,a_1+b_1,...,a_n+b_n)\]
Although we will sometimes use the alternative extension map
\[\varphi':R^n \to \Z_4^{2n}\]
\[(x_1,...,x_n) \mapsto (b_1,a_1+b_1,...,b_n,a_n+b_n)\]
The extension $\varphi$ is commonly used, as mapping cyclic codes under it gives rise to quasi-cyclic codes (see Definition 4.10 and Theorem 4.12 in \cite{Aydin2024}). We introduce $\varphi'$ as we observed that, in some cases, using this alternative Gray map gives rise to cyclic $\Z_4$ codes. Evidently, for a code $C$ over $R$, $\varphi(C)$ and $\varphi'(C)$ are equivalent $\Z_4$ codes, as they simply differ by a permutation of coordinates.

\begin{lemma}\label{grayproperties}
The following properties hold for the Gray map $\varphi$.
\begin{itemize}
    \item The Gray map $\varphi$ is a $\Z_4$-module isomorphism.
    \item Suppose $C\subset R^n$ is a free $R$-linear code of dimension $k$ and basis $\{v_1,...,v_k\}$. Then $\varphi(C)\subset \Z_4^{2n}$ is a free $\Z_4$-linear code of dimension $2k$ and basis \\
    $\{\varphi(v_1),...,\varphi(v_k),\varphi(i\cdot v_1),...,\varphi(i \cdot v_k)\}$.
\end{itemize}
    
\end{lemma}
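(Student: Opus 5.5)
First I will check that $\varphi$ is $\Z_4$-linear. On a single coordinate, $\varphi(a+bi)=(b,a+b)$ is visibly additive, and for $r\in\Z_4$ we have $\varphi(r(a+bi))=(rb,ra+rb)=r\varphi(a+bi)$. The extension to $R^n$ only applies this map coordinatewise and then fixes an arrangement of the $2n$ outputs, so it stays $\Z_4$-linear. For bijectivity, the inverse is $(u,v)\mapsto (v-u)+ui$ on each pair. Since $|R^n|=16^n=|\Z_4^{2n}|$, injectivity alone would also suffice. This gives the first bullet.

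For the second bullet, I will first describe $C$ as a $\Z_4$-module. Since $C$ is free over $R$ with basis $v_1,\dots,v_k$, every element has a unique expression $\sum_j r_j v_j$ with $r_j=a_j+b_ji\in R$. Rewriting,
\[\sum_j r_jv_j=\sum_j a_j v_j+\sum_j b_j (i v_j),\qquad a_j,b_j\in\Z_4.\]
So $\{v_1,\dots,v_k,iv_1,\dots,iv_k\}$ spans $C$ over $\Z_4$. It is also $\Z_4$-independent: if $\sum a_jv_j+\sum b_j iv_j=0$, then $\sum (a_j+b_ji)v_j=0$, and $R$-freeness forces $a_j+b_ji=0$. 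Because $\{1,i\}$ is a $\Z_4$-basis of $R$, this means $a_j=b_j=0$. Hence $C$ is a free $\Z_4$-module of rank $2k$ with this basis.

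Finally, I apply the $\Z_4$-module isomorphism from the first bullet. Restricted to $C$, it is an isomorphism of $\Z_4$-modules $C\to\varphi(C)$, and such a map sends a basis to a basis. Therefore $\varphi(C)$ is free of dimension $2k$ with basis $\{\varphi(v_j),\varphi(iv_j)\}$.

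None of these steps is a real obstacle. The only point needing care is the independence argument, which relies on $R\cong\Z_4\oplus\Z_4 i$ being free over $\Z_4$.
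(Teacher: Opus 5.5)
Your proof is correct and follows the same route as the paper: you check $\Z_4$-linearity and bijectivity of $\varphi$ coordinatewise (the cardinality $16^n=|\Z_4^{2n}|$ also appears in the paper), observe that $\{v_1,\dots,v_k,iv_1,\dots,iv_k\}$ is a $\Z_4$-basis of $C$, and transport this basis through the isomorphism. Your explicit independence argument, using that $\{1,i\}$ is a $\Z_4$-basis of $R$, supplies a step the paper only asserts.
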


\begin{proof}
    \leavevmode
    \begin{itemize}
    
    \item It is enough to check the unique component map. It is routine to check the map is additive and $\Z_4$-linear.
    Moreover, $\varphi(a+bi) = (b,a+b)=(0,0)$ if and only if $a+bi=0$, so $\ker\varphi = \{0\}$. Concluding, as $\#R =16 = \#\Z_4^2$ the map $\varphi$ is a bijection.

    \item Notice for any free $R$-linear code $C\subset R^n$ of dimension $k$, its dimension as a $\Z_4$-module is $2k$. In particular, if  $\{v_1,...,v_k\}$ is an $R$-basis for $C$, $\{v_1,...,v_k,i\cdot v_1,...,i \cdot v_k\}$ is a $\Z_4$-basis for $C$.

    As $\varphi$ is a $\Z_4$-module isomorphism, it maps free submodules to free submodules and basis to basis respectively.
    \end{itemize}

\end{proof}

Notice the same proof follows in a completely analogous way for the alternative Gray map $\varphi'$ we defined.

In particular, Lemma \ref{grayproperties} gives the generators of the free $\Z_4$-linear code $\varphi(C)$ from the generators of any free $R$-linear code $C$. Having an algorithm to obtain all generators of free constacyclic codes over $R$ in \ref{lifting}, we can obtain numerous free linear $\Z_4$ codes.

We ran this procedure to obtain all constacyclic codes over $R$ for all units $\lambda$ and all odd $n \leq 31$. Then we mapped to linear $\Z_4$ codes with the described Gray maps and compared with the current linear quaternary codes in the database \cite{Aydin2023}, \cite{Z4database}. This way we obtained various new best codes, these new codes are displayed in Table \ref{findings}. All codes displayed have better Lee distance than those previously known for the same length and dimension and have been added to the database; except for codes $1,2$ and $7$ in Table \ref{findings}, which we kept in our findings as they are $\Z_4$-cyclic with tied parameters with the linear $\Z_4$ codes in the database.

\begin{table}[htbp]
    \centering
    \caption{New $\Z_4$ codes obtained}
    \label{findings}
    \renewcommand{\arraystretch}{1.2}
    \begin{adjustbox}{center}
    \hspace{-12mm}
    \begin{tabular}{|c|c|c|c|} % <--- The '|' symbols add the vertical lines
    \hline
    \textbf{Code} & \textbf{$\lambda$} & \textbf{Generator for the $R$ code} & \textbf{$\Z_4$ Parameters} \\ \hline
     $1$ & $1$ & $x^5 + 3x^4 + 2x^3 + x^2 + 2x + 3$ & [$30,20,4$]\\ \hline %CYCLIC
    $2$ & $i$ & $x^4 + 2x^2 + 3ix + 1$ &  [$30,22,4$] \\ \hline
    $3$ & $1$ & $x^2 + x + 1$ & [$30,26,2$] \\ \hline
    $4$ & $i$ & $x^8 +2x^6 +ix^5 +x^4 +3ix^3 +2x^2 +1$  & [$34,18,8$] \\ \hline
    $5$ & $i$ & $x^{10} + 2x^8 + 3ix^7 + 3x^6 + x^4 + 3x^2 + 2ix + 3$  & [$42,22,9$]\\ \hline
    $6$ & $i$ & $x^9 + ix^8 + x^7 + ix^6 + 2x^5 + 3ix^4 + 3x^3 + 3ix^2 + 2x + 3i$ & [$42,24,7$] \\ \hline
    $7$ & $1$ & $x^8 + 2x^7 + 3x^6 + 3x^5 + 3x^4 + 3x^3 + 3x^2 + 2x + 1$ & [$42,26,4$] \\ \hline %CYCLIC
    $8$ & $1$ & $x^2 + x + 1$ & [$42,38,2$] \\ \hline
    $9$ & $i$ & $x^{11} + 2ix^{10} + x^9 + 3x^7 + ix^6 + x^5 + 2ix^4 + 3x + 3i$ &  [$46,24,11$] \\ \hline
    $10$ & $1$ & $x^2+x+1$ & [$54,50,2$] \\ \hline
    $11$ & $i$ & \makecell[c]{$x^{20} + ix^{19} + 2x^{16} + 3ix^{15} + 3x^{14} + 3ix^{13} + 2ix^{11}+ x^{10}$ \\
    $+ 3ix^9 + 2x^8 + 3ix^7 + x^6 + 2ix^5 + 2x^4 + 2ix^3 + 3x^2 + 2ix + 1$}  & [$62,22,21$] \\ \hline
    $12$ & $i$ & $x^{15} + 3ix^{14} + 3x^{13} + 2ix^{12} + x^9 + ix^8 + 2x^7 + 2ix^6 + 2ix^4 + x^3 + 3i$ &  [$62,32,14$] \\ \hline
    $13$ & $1$ & $x^{11} + x^{10} + x^9 + 2x^7 + 3x^6 + 3x^5 + 2x^3 + 3$ & [$62,40,8$] \\ \hline
    $14$ & $i$ & $x^{10} + 3ix^9 + x^8 + x^6 + 3ix^5 + 2x^4 + 3ix^3 + 3x^2 + 3ix + 3$ &  [$62,42,8$] \\ \hline
    $15$ & $i$ & $x+3$ & \makecell[c]{[$n,n-2,2$] \\
    even $18 \leq n \leq 62$, \\ $n \neq 30$} \\ \hline
    \end{tabular}
    \end{adjustbox}
    
\end{table}

Of all the codes listed, various have been found to be $\Z_4$-cyclic under the Gray map $\varphi'$, these are listed with their $\Z_4$ generators in Table \ref{cyclicgenstable}. The rest have not been found to be cyclic under either map.

\begin{table}[htbp]
    \centering
    \caption{New $\Z_4$-cyclic codes generators}
    \label{cyclicgenstable}
    \renewcommand{\arraystretch}{1.3}
    \makebox[\textwidth][c]{
    \begin{tabular}{|c|c|c|} % <--- The '|' symbols add the vertical lines
    \hline
    \textbf{Code} & \textbf{Generator as a $\Z_4$-cyclic code} & \textbf{$\Z_4$ Parameters} \\ \hline
    $1$ & $x^{10} + 2x^8 + 3x^6 + 2x^4 +x^2+3$ & [$30,20,4$] \\ \hline
    $7$ & $x^{16}+2x^{14}+3x^{12}+3x^{10}+3x^8+3x^6+3x^4+2x^2+1$ & [$42,26,4$] \\ \hline
    $13$ & $x^{22} + 2x^{16} + x^{12}+x^{10} +2x^8 + 3x^4 +3x^2 +3$ & [$62,40,8$] \\ \hline
    $3,8,10$ & $x^4 + x^2 + 1$ & \makecell[c]{[$n,n-4,2$]\\
    $n \in \{30,42,54\}$} \\ \hline
    $15$ & $x^2+3$ & \makecell[c]{[$n,n-2,2$] \\
    even $18 \leq n \leq 62$, $n \neq 30$} \\ \hline
    \end{tabular}}
\end{table}

\newpage

\begin{example}
    To understand how these codes are obtained, we explain the second code in Table \ref{findings}.
    In Example \ref{lastrunning2}, we obtained the factoring of $x^{15}-i$. Thus, we can find free $i$-constacyclic $R$ codes of length $15$.
    In particular, the polynomial $g = x^4+2x^2+3ix+1$ generates an $i$-constacyclic code 
    \[C_g = \langle \overline{g} \rangle = \id{x^4+2x^2+3ix+1 + \id{x^n-i}}\]
    of dimension $k=15-4=11$.

    Mapping this code under the map $\varphi$ gives a free $\Z_4$-linear code of double length $2n =30$ and double dimension $2k = 22$ by Lemma \ref{grayproperties}. Finally, Magma finds $d_L(\varphi(C_g)) = 4$. 
    % This code has higher minimum Lee distance than the previously known quaternary codes of same length and dimension.
\end{example}

% \begin{example}
%     To understand how these codes are obtained, we explain the first code in the table.

%     Running the code in Appendix \ref{code2} we get can get all divisors of $x^{15}-1$ over $R$. One of them being $g=x^5+3x^4+2x^3+x^2+2x+3$, thus the $R$-cyclic code

%     \[C_g = \langle \overline g \rangle\]
% has $\dim(C_g) = 15-5 =10$ by Lemma \ref{dimensioncodering}. Now we map this code to $\Z_4$ with the map $\varphi'$, this code has distance $30$ and dimension $20$ by Theorem \ref{grayproperties}. Finally, running a command in Magma, we find its Lee distance to be $d_L = 4$. Thus, the resulting $\Z_4$ code has parameters $[30,20,4]$.
% \end{example}

For entry number 15 in Table \ref{findings}, these [$n,n-2,2$] codes exist for all even $n\geq 4$ with the same generator, and were found to be better than those in the database for $n = 18,22,26, 34, 38, 42, 46,50, 54, 58, 62$. In fact, these and some other cyclic codes found have an easily defined structure and exist for infinitely many $n$, as we explain next in Proposition \ref{cyclicexplained}.

\begin{proposition}\label{cyclicexplained}
    The following cyclic $\Z_4$ codes exist:

    \begin{enumerate}
    \item For any length $n$, the polynomial $x+3$ generates a cyclic $\Z_4$ code with parameters $[n,n-1,2]$.

    \item For any even length $n$, the polynomial $x^2+3$ generates a cyclic $\Z_4$ code with parameters $[n,n-2,2]$.

    \item For any length $n \equiv 0 \pmod 6$, the polynomial $x^4+x^2+1$ generates a cyclic $\Z_4$ code with parameters $[n,n-4,2]$.
    \end{enumerate}
\end{proposition}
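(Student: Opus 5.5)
The plan is to treat all three items uniformly. First, I check that each generator $g$ is a monic divisor of $x^n-1$ in $\Z_4[x]$. Then I apply Theorem \ref{freee} with $R=\Z_4$, $\lm=1$ to get a free cyclic code of dimension $n-\deg g$. Finally, I determine the minimum Lee distance, by a lower bound from the structure of codewords together with an explicit weight-2 codeword.

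For divisibility: in item (1), $x+3=x-1$, and $x^n-1=(x-1)(x^{n-1}+\dots+1)$. In item (2), $x^2+3=x^2-1$, and for $n=2m$ we have $x^n-1=(x^2)^m-1=(x^2-1)(x^{2m-2}+\dots+1)$. In item (3), $x^4+x^2+1=(x^6-1)/(x^2-1)$ holds over $\Z$, hence over $\Z_4$. So $x^4+x^2+1\mid x^6-1$, and $x^6-1\mid x^n-1$ for $6\mid n$. All three generators are monic, so Theorem \ref{freee} gives dimensions $n-1$, $n-2$, $n-4$.

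For the distance, the upper bound $d_L\le 2$ comes from the codeword $g$ itself in items (1) and (2): $x+3$ and $x^2+3$ each have Lee weight $1+1=2$. In item (3), $g$ has Lee weight 3, so instead I use $(x^2-1)g=x^6-1$. This is zero for $n=6$, so I need a different witness. I will look for a codeword like $2\cdot(\text{something})$ with a single entry $2$. More simply, for item (3) I take $2(x^2+1)\cdot g/(\ldots)$ or search small multiples. I expect an element such as $x^3+3$ (that is, $x^3-1$) to be useful, and I will check whether it lies in the code by reducing modulo $g$. This witness search is the step I expect to be the main obstacle.

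For the lower bound $d_L\ge 2$, I must rule out codewords of Lee weight 1, i.e.\ $\pm x^j$. These are units in $\Z_4[x]/\langle x^n-1\rangle$. But every codeword $c$ satisfies $c(1)=0$ in $\Z_4$, since $g(1)=0$ in all three cases ($1+3$, $1+3$, $1+1+1=3\ne0$...). In item (3), $g(1)=3$, so I instead evaluate at a root of $g$ modulo 2: $c$ reduced mod 2 is divisible by $x^2+x+1$, and $\pm x^j$ mod 2 is $x^j$, which is not divisible by it. In items (1) and (2), $\pm x^j$ evaluated at 1 gives $\pm1\ne0$. Hence no weight-1 codeword exists, and $d_L=2$ once the weight-2 witness is found.
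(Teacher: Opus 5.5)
\textbf{Items (1) and (2).} Your divisibility checks, your use of Theorem \ref{freee}, and your lower bound $d_L\ge 2$ are fine. Evaluating at $1$, or reducing mod $2$ and using $\tilde g=(x^2+x+1)^2$, is a concrete form of the paper's remark that a Lee-weight-one word $\pm x^j$ is a unit, so only the whole space can have $d_L=1$. For these two items your argument matches the paper's.

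\textbf{The gap in item (3).} You never produce the weight-$2$ codeword, and your proposed candidate cannot work. Since $g$ is monic and divides $x^n-1$, the reduced representative of any nonzero codeword is a $\Z_4[x]$-multiple of $g$, so it has degree at least $4$. Hence $x^3+3$ is never in the code.

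You also dropped the right witness too quickly. For $n\ge 12$, the word $(x^2-1)g=x^6-1$ has $3$ in position $0$ and $1$ in position $6$. It is nonzero of Lee weight $2$, and it is exactly the paper's codeword $\overline g+3x^2\overline g=\overline{3x^6+1}$.

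\textbf{The case $n=6$.} Here no search can succeed, because the claim is false. In $\Z_4[x]/\id{x^6-1}$ we have $x^2g=g$, so the code is $\{(a,b,a,b,a,b): a,b\in\Z_4\}$. Its minimum Lee weight is $3$, so its parameters are $[6,2,3]$. The paper's proof misses this too, since its codeword $\overline{3x^6+1}$ is $0$ when $n=6$.

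So the fix is to finish item (3) with $x^6-1$ for $n\equiv 0 \pmod 6$, $n\ge 12$, and to record $n=6$ as an exception rather than look for another witness. Likewise, in items (1) and (2) the witness $g$ is a nonzero codeword only when $n>\deg g$.
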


\begin{proof}
    We will make use of Theorem \ref{freee}, which also holds for $n$ not coprime to the characteristic of the ring.

    \begin{enumerate}
        \item We will always have $1$ is a root of $x^n-1$. Thus, $x+3\mid x^n-1$ so we can generate a cyclic code with it of any length $n$ and dimension $n-1$.

        Now, we have $w_L(\overline{x+3}) = 2$. As the only cyclic code of minimal Lee distance $1$ is the trivial $[n,n,1]$ code, we must have $d_L =2$.

        \item For even $n$, both $1$ and $3$ are roots of $x^n-1$. As we have $x^2+3 = (x+1)(x+3)$, we have $x^2+3\mid x^n-1$. Thus this generates a cyclic code with dimension $n-2$ and the same exact argument for the minimal distance shows $d_L = 2$.

        \item First notice $x^6-1 =(x^2-1)(x^4+x^2+1)$. Now, if $n=6m$ we have $x^6-1\mid (x^6)^m-1$, so the polynomial $g =x^4+x^2+1$ always divides $x^n-1$ and thus generates a cyclic code of dimension $n-4$. Now note the codeword
        \[\overline g + 3x^2\overline g = \overline{3x^6+1} \in C_g\]
        has Lee weight $2$. As before, this implies $d_L = 2$.
    \end{enumerate}
\end{proof}

It is easy to check, by mapping the generators as in Lemma \ref{grayproperties}, that the codes of type $2$ as in Proposition \ref{cyclicexplained} can be obtained as in entry 15 of Table \ref{findings}, by mapping of the $R$-cyclic codes generated by $x+3$ using $\varphi'$.
Similarly, codes of type $3$ in Proposition \ref{cyclicexplained} can be obtained by mapping the $R$-cyclic codes of length divisible by $3$ generated by $x^2+x+1$ using $\varphi'$, as codes $3$, $8$ and $10$ in Table \ref{findings}.
Of course, due to an $[n,n-1,2]$ code existing for every $n$, these cyclic codes found of types $2$ and $3$, having Lee distance $2$ are not very relevant for most purposes. Nevertheless, the ones listed in Table \ref{findings} are better than those in the database, and therefore have been added.

Under the original Gray map $\phi$, one can map these $\Z_4$ codes to binary codes, often resulting in non-linear binary codes. This way a free $\Z_4$ code with parameters $[n,k,d_L]$ will be mapped to a binary code of length $2n$, minimal Hamming distance $d_L$ and size $4^k$.

\bibliographystyle{alpha} 
\bibliography{ref}

\newpage

\end{document}